\documentclass[conference]{IEEEtran}

\usepackage[T1]{fontenc}
\usepackage[utf8]{inputenc}
\usepackage{amsmath,amssymb,amsthm,mathtools}
\usepackage{cite}

\newtheorem{definition}{Definition}
\newtheorem{theorem}{Theorem}
\newtheorem{proposition}{Proposition}
\newtheorem{fact}{Fact}
\newtheorem{corollary}{Corollary}
\newtheorem{remark}{Remark}

\newcommand{\R}{\mathbb{R}}
\newcommand{\Rpp}{\mathbb{R}_{++}}
\newcommand{\C}{\mathbb{C}}
\newcommand{\E}{\mathbb{E}}
\newcommand{\Prb}{\mathbb{P}}

\newcommand{\eml}{\operatorname{eml}}
\newcommand{\depth}{\operatorname{depth}}
\newcommand{\leaves}{\mathcal V_{\mathrm{leaf}}}
\newcommand{\internals}{\mathcal V_{\mathrm{int}}}
\newcommand{\rootnode}{\rho}

\newcommand{\Qfun}{Q}

\begin{document}

\title{EML-AirComp: Layered Over-the-Air Computation from a Single Nomographic Gate}

\author{
\IEEEauthorblockN{Onur G\"unl\"u\textsuperscript{1,2}}
\IEEEauthorblockA{\textsuperscript{1}Lehrstuhl f\"ur Nachrichtentechnik, Technische Universit\"at Dortmund, Germany\\
\textsuperscript{2}Information Theory and Security Laboratory (ITSL), Link\"oping University, Sweden\\[0.5ex]
onur.guenlue@tu-dortmund.de
}
}
\maketitle

\begin{abstract}
Over-the-air computation (AirComp) exploits multiple-access superposition to compute functions of distributed data without
separately decoding all terminal messages. We study a reusable two-input AirComp gate for the exp-minus-log (EML) operation $\eml(u,v)=\exp(u)-\log(v)$, $v>0$. Thus, all internal nodes of a prescribed real-admissible EML tree reuse one gate type, avoiding node-specific nonlinear gate designs. Given an explicit EML tree whose intermediate logarithm arguments remain positive on a given compact domain, we derive additive white Gaussian noise (AWGN) and coherent flat fading implementations under peak-power constraints. We then characterize the number of gate evaluations, the dependency depth, evaluation latency, node-wise feasibility, deterministic error propagation, positivity preservation, and a high-probability AWGN error bound for the complete tree. A four-terminal two-hop example gives explicit positivity and end-to-end error conditions, and a digital interface propagates quantization and gate errors across the tree.
\end{abstract}

\section{Introduction}
Over-the-air computation (AirComp) uses the physical superposition of simultaneous wireless transmissions to compute a function of distributed data directly over the channel.  This idea appears in computation over multiple-access channels (MACs) \cite{NazerGastpar2007}, uncoded transmission over a Gaussian sensor network \cite{Gastpar2008}, compute-and-forward \cite{NazerGastpar2011}, analog function computation over wireless MACs \cite{GoldenbaumBocheStanczak2013,GoldenbaumStanczak2013}, and nomographic function computation in clustered Gaussian sensor networks \cite{GoldenbaumBocheStanczak2015}.  It has also become a central primitive in wireless learning, including federated edge learning and collaborative inference \cite{ZhuDuGunduzHuang2021,SahinYang2023,PerezNeira2025,YilmazHasirciogluQiaoGunduz2024}.

A standard AirComp-compatible class is the class of nomographic functions. Such a function can be written as
\begin{align}
    f(s_1,\ldots,s_K) = \psi\left(\sum_{k=1}^{K} \phi_k(s_k)\right),\label{eq:nomographic}
\end{align}
where terminal $k\in[1:K]$ applies the local pre-processing function $\phi_k(\cdot)$, the MAC provides the sum, and the receiver applies the post-processing function $\psi(\cdot)$. This model directly covers sums, weighted averages, and other functions with a post-processed-sum representation. For multivariate inputs, however, continuous functions admitting a representation of the form \eqref{eq:nomographic} form a nowhere-dense subset of the continuous functions under the uniform norm~\cite{Buck1982}. A broader approach is to compute a finite sequence of nomographic functions. In wireless computation, this finite-succession viewpoint appears in clustered Gaussian sensor networks~\cite{GoldenbaumBocheStanczak2015}. This motivates the question studied here: \textit{can nonlinear scalar operations around AirComp sums be evaluated by repeatedly using the same AirComp gate?}

We consider the exp-minus-log (EML) operation, defined as $\eml(u,v)=\exp(u)-\log(v)$ for $v>0$. A recent result gives constructive EML representations of the operations in a specified scientific-calculator basis using the constant $1$ and repeated applications of the EML operation \cite{Odrzywolek2026}. Some of those representations require complex intermediate values, whereas
the AirComp model considered here is real-valued. We therefore
use only EML identities whose complete trees are real-valued and whose right-child values remain strictly positive on the prescribed operating domain. To this end, we map explicit real-admissible EML trees to layered AirComp evaluation schedules. Each internal node of the tree is evaluated by the same two-input EML-AirComp gate. For any fixed tree on a compact operating domain, the construction gives the number of gate evaluations, the dependency depth, per-node transmit-amplitude bounds, deterministic error propagation, and positivity margins for all logarithm inputs. We also specialize the same gate to additive white Gaussian noise (AWGN) and coherent flat fading MACs under peak-power constraints. Here, reuse refers to the functional gate type: every internal node uses the same nonlinear pre-processing maps $\exp(\cdot)$ and $-\log(\cdot)$ and the same receiver normalization structure, while the operating intervals, scaling factors, and channel coefficients may be node dependent.

The main contributions are as follows. First, we give peak-power-feasible AWGN and coherent flat-fading gate implementations with explicit scaling, mean-squared error (MSE), Gaussian-tail, and Rayleigh channel-inversion formulas. Second, we show that a fixed real-admissible EML tree $T$ requires one gate evaluation per internal node, has critical-path length $\depth(T)$, and admits explicit resource-latency bounds. Third, we derive deterministic path-product and high-probability AWGN bounds with sufficient positivity conditions, and illustrate them through a three-gate hierarchical example and a generic digital-error interface.

\section{Single EML Gate over AWGN and Fading Channels} \label{sec:single_gate}
This section provides AWGN and coherent flat-fading realizations of one EML-AirComp gate under peak-power constraints.

\subsection{Nomographic Form and Bounded Operating Intervals}
Let $u\in I=[a,b]$ and $v\in J=[c,d]$, where $c>0$. As the right AirComp input contributes a scaled version of $-\log(v)$, we define
\begin{align}
    B_J=\max\{|\log c|,|\log d|\},
\end{align}
which is used to ensure that the transmitted signal satisfies the peak-power constraint. If this bound is zero, then $\log(v)=0$ over the considered interval, so the corresponding transmitted waveform is zero. In that case, the associated peak-power constraint does not restrict the scaling parameter and we use the convention $\sqrt{P}/0=\infty$ for this special case. The following fact provides the exact nomographic representation of the EML gate and the channel-input bounds used in the AWGN and fading implementations below.

\begin{fact}\label{fact:eml_gate}
For $(u,v)\in I\times J$, the function
\begin{align}
    G(u,v)=\exp(u)-\log(v)
\end{align}
is nomographic with
\begin{align}
    G(u,v)=\psi\big(\phi_{\mathrm L}(u)+\phi_{\mathrm R}(v)\big),
\end{align}
where
\begin{align}
    \phi_{\mathrm L}(u)=\exp(u),\quad
    \phi_{\mathrm R}(v)=-\log(v),\quad
    \psi(y)=y .
\end{align}
With scaling $A>0$, the two real channel inputs used by the EML-AirComp gate are
\begin{align}
    X_{\mathrm L}=A\exp(u),\qquad X_{\mathrm R}=-A\log(v) \label{eq:XRandXLforsclaingA}
\end{align}
which satisfy
\begin{align}
    |X_{\mathrm L}|\le A \exp(b),\qquad
    |X_{\mathrm R}|\le A B_J.
\end{align}
\end{fact}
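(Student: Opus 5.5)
The argument splits into two parts: verify the nomographic identity directly, then bound the two channel inputs using monotonicity of $\exp$ and $\log$ on the operating intervals.

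\textbf{Nomographic identity.} Substitute the stated maps into $\psi(\phi_{\mathrm L}(u)+\phi_{\mathrm R}(v))$. Since $\psi$ is the identity, this gives $\exp(u)-\log(v)=G(u,v)$. The form \eqref{eq:nomographic} holds with $K=2$ terminals, pre-processing maps $\phi_{\mathrm L}$ and $\phi_{\mathrm R}$, and post-processing map $\psi$. Every map is well defined on $I\times J$ because $c>0$ guarantees $v>0$, so $\log(v)$ is real.

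\textbf{Bound on $X_{\mathrm L}$.} Since $\exp$ is positive and increasing and $A>0$, for $u\le b$ we have $0<X_{\mathrm L}=A\exp(u)\le A\exp(b)$. Hence $|X_{\mathrm L}|\le A\exp(b)$.

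\textbf{Bound on $X_{\mathrm R}$.} On $J=[c,d]$ with $c>0$, $\log$ is continuous and increasing, so $\log(v)\in[\log c,\log d]$. The absolute value of any point in $[\alpha,\beta]$ is at most $\max\{|\alpha|,|\beta|\}$, because $|x|$ is convex and therefore attains its maximum at an endpoint. This gives $|\log v|\le B_J$, and so $|X_{\mathrm R}|=A|\log v|\le AB_J$. If $B_J=0$, then $c=d=1$ and $X_{\mathrm R}\equiv 0$, and the bound holds trivially; this matches the convention stated before the fact.

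No step poses a real obstacle. The only point needing care is the bound on $|\log v|$, since $\log v$ may change sign on $J$ when $c<1<d$. The endpoint-maximum argument above covers that case without any case split.
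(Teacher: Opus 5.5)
Your proof is correct and uses the same reasoning as the paper. The nomographic identity follows by direct substitution with $\psi$ the identity, and the amplitude bounds come from $u\le b$ together with $|\log(v)|\le B_J$ on $J=[c,d]\subset\Rpp$. The paper states this fact without a separate proof and uses exactly these two inequalities later, in the proof of Fact~\ref{fact:fading_gate}.
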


These bounds are used below to choose the scaling $A$ so that the corresponding peak-power constraints are satisfied.

\subsection{AWGN EML-AirComp Gate}
We now specialize the EML gate to a real two-user AWGN MAC. The scaling parameter is chosen subject to the peak-power constraints.

Consider the real two-user AWGN multiple-access gate
\begin{align}
    Y=X_{\mathrm L}+X_{\mathrm R}+Z,
    \qquad Z\sim\mathcal N(0,\sigma^2),
    \label{eq:awgn_gate_model}
\end{align}
with peak-power constraints $|X_{\mathrm L}|^2\le P_{\mathrm L}$ and $|X_{\mathrm R}|^2\le P_{\mathrm R}$.  For scaling $A>0$, set \eqref{eq:XRandXLforsclaingA} and form the estimate 
\begin{align}
    \widehat G=Y/A.
\end{align}

The next result gives the exact AWGN error law and the MSE-minimizing scaling under peak-power constraints, where $Q(x)$ denotes the Gaussian tail function. Throughout the AWGN and fading models below, assume $P_{\mathrm L},P_{\mathrm R}>0$.

\begin{fact}\label{fact:awgn_gate}
Consider the AWGN channel. If we have
\begin{align}
    0<A\le A_{\max}^{\mathrm{AWGN}}
    \triangleq
    \min\left\{\frac{\sqrt{P_{\mathrm L}}}{\exp(b)},
    \frac{\sqrt{P_{\mathrm R}}}{B_J}\right\},
    \label{eq:awgn_A_constraint}
\end{align}
then \eqref{eq:XRandXLforsclaingA} satisfies the peak-power constraints for all $(u,v)\in I\times J$. Moreover, we have
\begin{align}
    \widehat G = \frac{Y}{A} = G(u,v)+\xi,\qquad \xi=\frac{Z}{A}\sim \mathcal N\left(0,\frac{\sigma^2}{A^2}\right)
\end{align}
so that we obtain
\begin{align}
    \E\big[(\widehat G-G)^2\big]=\frac{\sigma^2}{A^2}
    \label{eq:awgn_mse}
\end{align}
and, for every $\epsilon>0$, we have
\begin{align}
    \Prb\big[|\widehat G-G|>\epsilon\big]
    =2\Qfun\!\left(\frac{A\epsilon}{\sigma}\right).
    \label{eq:awgn_tail}
\end{align}
Among all feasible $A$, the minimum local MSE is achieved by $A=A_{\max}^{\mathrm{AWGN}}$ and is equal to
\begin{align}
    \mathrm{MSE}_{\min}^{\mathrm{AWGN}}
    =\frac{\sigma^2}{(A_{\max}^{\mathrm{AWGN}})^2}.
\end{align}
\end{fact}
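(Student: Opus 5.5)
The proof splits into four parts: feasibility, the error law, the two error metrics, and optimality of the scaling.

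\emph{Feasibility.} We invoke the amplitude bounds of Fact~\ref{fact:eml_gate}, namely $|X_{\mathrm L}|\le A\exp(b)$ and $|X_{\mathrm R}|\le AB_J$ for all $(u,v)\in I\times J$. The first bound uses monotonicity of $\exp$ on $[a,b]$. The second uses monotonicity of $\log$ on $[c,d]$ with $c>0$, so that $|\log v|\le\max\{|\log c|,|\log d|\}$. If $A\le \sqrt{P_{\mathrm L}}/\exp(b)$, then $|X_{\mathrm L}|^2\le A^2\exp(2b)\le P_{\mathrm L}$. Likewise, if $A\le\sqrt{P_{\mathrm R}}/B_J$, then $|X_{\mathrm R}|^2\le P_{\mathrm R}$. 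For the degenerate case $B_J=0$ we use the convention $\sqrt{P_{\mathrm R}}/0=\infty$: then $X_{\mathrm R}\equiv 0$, the second constraint holds trivially, and the minimum in \eqref{eq:awgn_A_constraint} reduces to its first term. That term is finite and positive because $P_{\mathrm L}>0$.

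\emph{Error law.} We substitute \eqref{eq:XRandXLforsclaingA} into \eqref{eq:awgn_gate_model}. This gives $Y=A(\exp(u)-\log v)+Z=AG(u,v)+Z$, and dividing by $A>0$ gives $\widehat G=G+Z/A$. Since $Z\sim\mathcal N(0,\sigma^2)$ and scaling a Gaussian by $1/A$ keeps it Gaussian, $\xi=Z/A\sim\mathcal N(0,\sigma^2/A^2)$.

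\emph{Error metrics.} The MSE \eqref{eq:awgn_mse} is $\E[\xi^2]=\operatorname{Var}(\xi)=\sigma^2/A^2$, because $\xi$ has zero mean. For the tail \eqref{eq:awgn_tail}, the event $|\widehat G-G|>\epsilon$ is the event $|Z|>A\epsilon$. By symmetry of the zero-mean Gaussian, its probability is $2\Prb[Z>A\epsilon]=2Q(A\epsilon/\sigma)$.

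\emph{Optimality.} The feasible set of scalings is $(0,A_{\max}^{\mathrm{AWGN}}]$, and $\sigma^2/A^2$ is strictly decreasing in $A$ on this set. Hence the minimum MSE is attained at the right endpoint and equals $\sigma^2/(A_{\max}^{\mathrm{AWGN}})^2$. We also note that the feasible region is exactly this interval: any larger $A$ violates a peak constraint at the point $u=b$, or at the endpoint of $J$ attaining $B_J$. Strictly speaking, this exactness is only needed to claim that $A_{\max}^{\mathrm{AWGN}}$ is optimal among all feasible scalings, not just among those satisfying \eqref{eq:awgn_A_constraint}.

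The main care point is the degenerate $B_J=0$ convention and well-posedness of $A_{\max}^{\mathrm{AWGN}}$; everything else is direct computation.
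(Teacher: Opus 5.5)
Your proof is correct and follows the paper's route: feasibility from the amplitude bounds of Fact~\ref{fact:eml_gate}, substitution giving $Y=AG(u,v)+Z$ and hence the Gaussian error law, MSE and tail, and optimality because $\sigma^2/A^2$ decreases in $A$. Your extra check that the feasible set is exactly $(0,A_{\max}^{\mathrm{AWGN}}]$ makes explicit a step the paper leaves implicit when it identifies the largest feasible scaling with $A_{\max}^{\mathrm{AWGN}}$.
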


\begin{IEEEproof}
The peak-power constraints follow from Fact~\ref{fact:eml_gate}.  Substituting \eqref{eq:XRandXLforsclaingA} into \eqref{eq:awgn_gate_model} gives 
\begin{align}
    Y=A(\exp(u)-\log(v))+Z,
\end{align}
which proves the Gaussian error law, tail probability, and local MSE. Since $\sigma^2/A^2$ is decreasing in $A>0$, the MSE-minimizing feasible scaling is the largest feasible scaling.
\end{IEEEproof}

\subsection{Coherent Flat-Fading EML-AirComp Gate}
We next consider a coherent flat-fading MAC. Each transmitter applies channel inversion so that the two pre-processed EML inputs add coherently at the receiver.

Consider the complex flat-fading MAC
\begin{align}
    Y=h_{\mathrm L}X_{\mathrm L}+h_{\mathrm R}X_{\mathrm R}+Z,
    \label{eq:fading_gate_model}
\end{align}
where $h_{\mathrm L},h_{\mathrm R}\in\C\setminus\{0\}$ are independent fading coefficients known at the corresponding transmitters, and $Z$ is independent of $(h_{\mathrm L},h_{\mathrm R})$. Let $\sigma_R^2=\E[(\Re\{Z\})^2]$. Use
\begin{align}
    X_{\mathrm L}=A\frac{h_{\mathrm L}^*}{|h_{\mathrm L}|^2}\exp(u),
    \qquad
    X_{\mathrm R}=-A\frac{h_{\mathrm R}^*}{|h_{\mathrm R}|^2}\log(v),
    \label{eq:fading_eml_tx}
\end{align}
and form the estimate 
\begin{align}
    \widehat G =G(u,v)+\xi,\qquad \xi=\frac{\Re\{Z\}}{A}.
\end{align}
If $\Re\{Z\}\sim\mathcal N(0,\sigma_R^2)$, then we have
\begin{align}
    \xi\sim\mathcal N\left(0,\frac{\sigma_R^2}{A^2}\right).
\end{align}

The next result analyzes the coherent flat fading gate under channel inversion. It gives the conditional MSE and tail probability for fixed fading coefficients, and the feasibility probability that a fixed scaling satisfies the peak-power constraints under independent Rayleigh fading.

\begin{fact}\label{fact:fading_gate}
Consider the coherent flat-fading MAC. If we have
\begin{align}
    0<A\le A_{\max}^{\mathrm{fad}}
    \triangleq
    \min\left\{\frac{\sqrt{P_{\mathrm L}}|h_{\mathrm L}|}{\exp(b)},
    \frac{\sqrt{P_{\mathrm R}}|h_{\mathrm R}|}{B_J}\right\},
    \label{eq:fading_A_constraint}
\end{align}
then \eqref{eq:fading_eml_tx} satisfies the peak-power constraints for all $(u,v)\in I\times J$.  The conditional MSE is
\begin{align}
    \E\big[(\widehat G-G)^2\,|\,h_{\mathrm L},h_{\mathrm R}\big]
    =\frac{\sigma_R^2}{A^2}.
    \label{eq:fading_mse}
\end{align}
If $\Re\{Z\}\sim\mathcal N(0,\sigma_R^2)$, then we have
\begin{align}
    \Prb\big[|\widehat G-G|>\epsilon\,|\,h_{\mathrm L},h_{\mathrm R}\big]
    =2\Qfun\!\left(\frac{A\epsilon}{\sigma_R}\right).\label{eq:fading_tailprobability}
\end{align}
Among all feasible $A$ for the realized channel, the conditional MSE is minimized by $A=A_{\max}^{\mathrm{fad}}$. Since $|h_{\mathrm L}|^2$ and $|h_{\mathrm R}|^2$ are independent exponential random variables with some means $\Omega_{\mathrm L}$ and $\Omega_{\mathrm R}$, respectively, then a fixed scaling $A$ is feasible with probability
\begin{align}
    p_{\mathrm{feas}}(A)
    =\exp\!\left(
    -\frac{A^2 \exp(2b)}{P_{\mathrm L}\Omega_{\mathrm L}}
    -\frac{A^2 B_J^2}{P_{\mathrm R}\Omega_{\mathrm R}}
    \right).
    \label{eq:rayleigh_feas}
\end{align}
\end{fact}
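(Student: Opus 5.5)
The plan is to mirror the proof of Fact~\ref{fact:awgn_gate}, adding the effect of channel inversion on the transmit amplitudes, and then to compute the Rayleigh feasibility probability directly. There are four steps.

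\textbf{Peak-power feasibility.} Since $|h_{\mathrm L}^*|/|h_{\mathrm L}|^2 = 1/|h_{\mathrm L}|$, the channel inputs in \eqref{eq:fading_eml_tx} satisfy
\begin{align}
|X_{\mathrm L}| = \frac{A\exp(u)}{|h_{\mathrm L}|}\le \frac{A\exp(b)}{|h_{\mathrm L}|},
\qquad
|X_{\mathrm R}| = \frac{A|\log v|}{|h_{\mathrm R}|}\le \frac{A B_J}{|h_{\mathrm R}|}.
\end{align}
These bounds use the input bounds of Fact~\ref{fact:eml_gate}. Requiring the right-hand sides to be at most $\sqrt{P_{\mathrm L}}$ and $\sqrt{P_{\mathrm R}}$ gives exactly \eqref{eq:fading_A_constraint}. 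We keep the convention $\sqrt{P}/0=\infty$ for the case $B_J=0$.

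\textbf{Error law.} Substituting into \eqref{eq:fading_gate_model}, we have $h_{\mathrm L}h_{\mathrm L}^*/|h_{\mathrm L}|^2=1$, and likewise for the right input. Hence
\begin{align}
Y = A\big(\exp(u)-\log(v)\big)+Z .
\end{align}
This is real apart from $Z$. The receiver takes $\widehat G=\Re\{Y\}/A = G(u,v)+\Re\{Z\}/A$, which matches the stated estimate. Conditioned on $(h_{\mathrm L},h_{\mathrm R})$, the error $\xi$ does not depend on the channel, because $Z$ is independent of the fading. Therefore the conditional MSE is $\E[(\Re\{Z\})^2]/A^2=\sigma_R^2/A^2$, which is \eqref{eq:fading_mse}.

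Under Gaussianity we have $\xi\sim\mathcal N(0,\sigma_R^2/A^2)$. Then
\begin{align}
\Prb[|\xi|>\epsilon]=2\Qfun(A\epsilon/\sigma_R),
\end{align}
which is \eqref{eq:fading_tailprobability}. Minimality at $A_{\max}^{\mathrm{fad}}$ follows because $\sigma_R^2/A^2$ is strictly decreasing in $A$, exactly as in Fact~\ref{fact:awgn_gate}.

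\textbf{Rayleigh feasibility.} A fixed $A$ is feasible if and only if
\begin{align}
|h_{\mathrm L}|^2\ge A^2\exp(2b)/P_{\mathrm L}
\quad\text{and}\quad
|h_{\mathrm R}|^2\ge A^2B_J^2/P_{\mathrm R}.
\end{align}
By independence, the probability of this event factorizes into two exponential survival functions. Using $\Prb[W\ge t]=\exp(-t/\Omega)$ for $W\sim\mathrm{Exp}$ with mean $\Omega$, the product of the two survival probabilities gives \eqref{eq:rayleigh_feas}. When $B_J=0$, the second factor equals $1$, which is consistent with the formula.

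\textbf{Main obstacle.} The only delicate point is the receiver's real-part extraction. Because $Z$ is complex, $\widehat G$ must be defined via $\Re\{Y\}$, so that the relevant noise variance is $\sigma_R^2$ rather than $\E|Z|^2$. I will state this step explicitly. I will also verify that independence of $Z$ from the fading justifies dropping the conditioning in the noise distribution.
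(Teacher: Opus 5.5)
Your proposal is correct and follows the paper's proof essentially step for step: channel inversion reduces the received signal to $Y=A(\exp(u)-\log(v))+Z$, so the conditional MSE and Gaussian tail come from $\Re\{Z\}/A$; the bounds $|X_{\mathrm L}|\le A\exp(b)/|h_{\mathrm L}|$ and $|X_{\mathrm R}|\le AB_J/|h_{\mathrm R}|$ give both feasibility and the optimality of $A_{\max}^{\mathrm{fad}}$; and the Rayleigh formula is the product of two exponential survival functions. Your remarks on the $B_J=0$ convention and on independence of $Z$ from $(h_{\mathrm L},h_{\mathrm R})$ (with $A$ fixed once the channel is given) are minor clarifications the paper leaves implicit, as is the fact that the ``only if'' direction of feasibility relies on $\exp(b)$ and $B_J$ being attained on $I\times J$.
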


\begin{IEEEproof}
Substituting \eqref{eq:fading_eml_tx} into \eqref{eq:fading_gate_model} gives
\begin{align}
    \widehat G=\frac{\Re\{Y\}}{A}=G(u,v)+\frac{\Re\{Z\}}{A},
\end{align}
which yields \eqref{eq:fading_mse} and \eqref{eq:fading_tailprobability}. Moreover, using $u\le b$ and $|\log(v)|\le B_J$, we obtain
\begin{align}
    |X_{\mathrm L}|^2&\le\frac{A^2\exp(2b)}{|h_{\mathrm L}|^2}\le P_{\mathrm L},\quad
    |X_{\mathrm R}|^2\le\frac{A^2B_J^2}{|h_{\mathrm R}|^2}\le P_{\mathrm R},
\end{align}
where the last inequalities follow from \eqref{eq:fading_A_constraint}. Since $\sigma_R^2/A^2$ decreases with $A>0$, the conditional MSE is minimized by $A=A_{\max}^{\mathrm{fad}}$. Moreover, for a fixed $A$, feasibility for all $(u,v)\in I\times J$ is equivalent to
\begin{align}
    |h_{\mathrm L}|^2\ge\frac{A^2\exp(2b)}{P_{\mathrm L}},\qquad
    |h_{\mathrm R}|^2\ge\frac{A^2B_J^2}{P_{\mathrm R}}.
\end{align}
Since the channel power gains are independent exponential random variables, their tail probabilities give \eqref{eq:rayleigh_feas}.
\end{IEEEproof}

\section{Explicit EML Trees and Layered AirComp Schedules}\label{sec:trees}
Having established the implementation of one EML-AirComp gate, we now extend it to a fixed EML tree, where each internal node requires one gate evaluation and the longest leaf-to-root path determines the dependency depth.

The following definition fixes the tree notation used in all later bounds.

\begin{definition}
An EML computation tree $T$ over inputs $s=(s_1,\ldots,s_K)$ is a finite binary computation tree whose internal nodes are EML gates. The leaves provide the starting values of the computation, and the root node gives the final output. Let $\leaves(T)$ and $\internals(T)$ denote the sets of leaves and internal nodes, respectively, and let $\rootnode$ denote the root. Each leaf is either one of the input variables $s_k$ or a copy of the constant $1$. Thus, for a leaf $i$, the associated value is either
$F_i(s)=s_k$ for some $k\in[1:K]$, or $F_i(s)=1$. Each internal node $q\in\internals(T)$ has exactly two children: a left child $\ell(q)$ and a right child $r(q)$. The value computed at node $q$ is defined recursively as
\begin{align}
    F_q(s)=G\big(F_{\ell(q)}(s),F_{r(q)}(s)\big).
\end{align}
The function represented by the whole tree is the value at the root $F_T(s)=F_{\rootnode}(s)$. The number of EML gates in the tree is $W(T)$, and the depth $\depth(T)$ is the largest number of EML gates on any path from a leaf to the root.\hfill $\lozenge$
\end{definition}

The next definition gives the real-admissibility condition, needed as the second argument of $G$ enters a logarithm.

\begin{definition}
Let $D\subset\R^K$.  An EML tree $T$ is real-admissible on $D$ if all node functions are recursively well-defined and real-valued on $D$, and if we have
\begin{align}
    F_{r(q)}(s)>0, \qquad \forall s\in D, \quad \forall q\in\internals(T).
\end{align}\hfill $\lozenge$
\end{definition}

For each internal node $q$, we need bounds on the values that can appear at its two inputs. As $s$ ranges over the operating domain $D$, the left child of node $q$ can take values in $I_q=F_{\ell(q)}(D)$ and the right child can take values in $J_q=F_{r(q)}(D)$. If $D$ is compact and $T$ is real-admissible on $D$, then all node functions are continuous. Thus, the sets $I_q$ and $J_q$ are compact. Moreover, real-admissibility guarantees that the right input of every EML gate is strictly positive on $D$. In the analysis, we use intervals that contain the following value ranges:
\begin{align}
    I_q\subseteq[a_q,b_q],\qquad J_q\subseteq[c_q,d_q]\subset\Rpp. \label{eq:node_intervals}
\end{align}
Thus, $B_q=\max\{|\log c_q|,|\log d_q|\}$ is an upper bound on $|\log v|$ at node $q$.

An evaluation order is said to be \textit{admissible} if each internal node is evaluated only after both child values are available. Moreover, at each internal node, the two child values are assumed to be available at the transmitters feeding one EML-AirComp gate. Routing, storage, forwarding, synchronization, and waveform generation between gate evaluations are not modeled. Hence, $W(T)$ counts only EML-AirComp gate evaluations.

The next result counts the required EML-AirComp gate evaluations, identifies the tree depth as the critical-path latency, and gives node-wise peak-power feasibility conditions.

\begin{proposition}\label{prop:compiler}
Let $D\subset\R^K$ be compact and let $T$ be real-admissible on $D$. Consider any evaluation order in which an internal node is evaluated only after the values of its two children have already been computed. Then, the function $F_T$ is evaluated by one EML-AirComp gate evaluation for each internal node of $T$. Hence, the total number of EML-AirComp gate evaluations is
$W(T)=|\internals(T)|$. A critical-path lower bound on the
number of sequential gate-evaluation intervals is $\depth(T)$, achieved when all nodes in each dependency layer can be evaluated in parallel. 

For each internal node $q$, let the possible left- and right-child values be enclosed by the intervals in \eqref{eq:node_intervals}. If node $q$ is implemented over the real AWGN gate with scaling $A_q$ and peak-power limits $P_{q,\mathrm L}$ and $P_{q,\mathrm R}$, then it is sufficient to choose
\begin{align}
0<A_q\le\min\left\{\frac{\sqrt{P_{q,\mathrm L}}}{\exp(b_q)},\frac{\sqrt{P_{q,\mathrm R}}}{B_q}\right\}.
\label{eq:tree_awgn_A}
\end{align}
Similarly, if node $q$ is implemented over the coherent flat-fading gate with channel coefficients $h_{q,\mathrm L}$ and $h_{q,\mathrm R}$, then it is sufficient to choose
\begin{align}
0<A_q\le\min\left\{\frac{\sqrt{P_{q,\mathrm L}}|h_{q,\mathrm L}|}{\exp(b_q)},\frac{\sqrt{P_{q,\mathrm R}}|h_{q,\mathrm R}|}{B_q}\right\}.
\label{eq:tree_fading_A}
\end{align}
These guarantee the peak-power constraints.
\end{proposition}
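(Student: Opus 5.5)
The plan is to split the argument into a combinatorial part (gate count and latency) and a per-node analytic part (peak-power feasibility). The analytic part reduces directly to Facts~\ref{fact:awgn_gate} and~\ref{fact:fading_gate}.

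\textbf{Gate count.} I will argue by strong induction on the height of a node in $T$. A leaf value $F_i(s)$ is either $s_k$ or $1$ and is available without any gate evaluation. Take an internal node $q$. Under an admissible order, the values $F_{\ell(q)}(s)$ and $F_{r(q)}(s)$ are already available when $q$ is processed. Real-admissibility gives $F_{r(q)}(s)>0$, so a single EML-AirComp gate with inputs $u=F_{\ell(q)}(s)$ and $v=F_{r(q)}(s)$ produces $G(u,v)=F_q(s)$ by Fact~\ref{fact:eml_gate}. Applying this at the root yields $F_T$. Each internal node is used exactly once and leaves use none, so the number of evaluations is $|\internals(T)|=W(T)$.

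\textbf{Latency.} Define the level $L(q)$ of a node as the largest number of gates on a path from a leaf to $q$, inclusive, with $L=0$ at leaves. The parent of a node always has strictly larger level. Along a leaf-to-root path with $\depth(T)$ gates, each gate must be evaluated in a strictly later interval than its child gate, because it needs that child's output. Hence at least $\depth(T)$ sequential intervals are required. For achievability, evaluate all nodes with $L(q)=t$ in parallel during interval $t$, for $t=1,\ldots,\depth(T)$. Both children of $q$ have level at most $L(q)-1$, so the schedule is admissible and uses exactly $\depth(T)$ intervals.

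\textbf{Peak power.} At node $q$, the actual inputs lie in $I_q\times J_q\subseteq[a_q,b_q]\times[c_q,d_q]$ by \eqref{eq:node_intervals}, and $c_q>0$. Setting $I=[a_q,b_q]$ and $J=[c_q,d_q]$, so that $B_J=B_q$, Fact~\ref{fact:awgn_gate} under \eqref{eq:tree_awgn_A} guarantees $|X_{\mathrm L}|^2\le P_{q,\mathrm L}$ and $|X_{\mathrm R}|^2\le P_{q,\mathrm R}$ over the enclosing box, and hence over the actual value ranges. The fading case follows identically from Fact~\ref{fact:fading_gate} under \eqref{eq:tree_fading_A}. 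When $B_q=0$, the convention $\sqrt{P}/0=\infty$ applies.

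The only delicate step is the latency claim. It needs a precise statement that a gate's output can be used only in a later interval, which makes the matching lower bound rely on the modeling assumption that routing and forwarding are instantaneous. I expect no real obstacle beyond stating this assumption clearly.
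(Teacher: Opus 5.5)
Your proposal is correct and follows essentially the same route as the paper. It evaluates one gate per internal node in a topological order, takes the longest leaf-to-root dependency chain as the $\depth(T)$ critical path, and obtains peak-power feasibility by applying Facts~\ref{fact:awgn_gate} and~\ref{fact:fading_gate} to the enclosing intervals of each node; you additionally spell out the level-by-level parallel schedule that attains $\depth(T)$, which the paper only asserts.
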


\begin{IEEEproof}
Each internal node computes exactly one value of the form $G(u,v)$.  Evaluating the nodes in any topological order, therefore, evaluates the root after all internal nodes have been evaluated. The number of gate invocations is $|\internals(T)|=W(T)$, and the longest dependency chain is the maximum number of internal nodes on a root-to-leaf path.  The node-wise AWGN and fading constraints follow from Facts~\ref{fact:awgn_gate} and \ref{fact:fading_gate} applied to the enclosing intervals of node $q$.
\end{IEEEproof}

\begin{remark}\label{rem:parallelism_latency}
Assume that each EML-AirComp gate evaluation takes one interval and that at most $P\ge1$ gates can be evaluated in parallel. Let $W_\ell$ be the number of internal nodes whose longest leaf-to-node path contains $\ell$ gates, and let $D_P^\star(T)$ be the minimum number of intervals required by an admissible schedule. Then, we have
\begin{align}
    \max\left\{\depth(T),\left\lceil\frac{W(T)}{P}\right\rceil\right\}\le D_P^\star(T)\le\sum_{\ell=1}^{\depth(T)}\left\lceil\frac{W_\ell}{P}\right\rceil
\label{eq:scheduling_bounds}
\end{align}
where the lower bounds follow from the longest dependency path and the total number of gates, while evaluating one layer at a time gives the upper bound. Thus, $D_1^\star(T)=W(T)$, and $D_P^\star(T)=\depth(T)$ if $P\ge\max_\ell W_\ell$.
\end{remark}

For a noisy sequential evaluation, the same node-wise conditions apply on any bounded-error event for which the perturbed operands remain in the prescribed intervals. Outside this event, no downstream peak-power guarantee is made, and the evaluation may be declared to be in outage, as discussed in the next section.

\section{Tree-Level Error Propagation, Positivity, and Reliability}
\label{sec:tree_error}
This section derives deterministic and high-probability AWGN tree-error bounds while ensuring positive logarithm arguments.

Let $\widehat F_q$ denote the noisy computed value at node $q$. For each internal node $q$, define the analysis rectangle
\begin{align}
    \mathcal R_q=[a_q,b_q]\times[c_q,d_q]\subset\R\times\Rpp,\label{eq:rectangles}
\end{align}
over which the local sensitivity bounds are evaluated. Define
\begin{align}
    L_q^{\mathrm L}=\exp(b_q),\qquad L_q^{\mathrm R}=\frac{1}{c_q}
\end{align}
where $L_q^{\mathrm L}$ bounds how sensitive the output of node $q$ is to an error in its left input, and $L_q^{\mathrm R}$ in its right input. For a connection $p\to q$, where $p$ is a child of $q$, set the sensitivity factor as
\begin{align}
    L_{p\to q}=\begin{cases}
                L_q^{\mathrm L}, & p=\ell(q),\\
                L_q^{\mathrm R}, & p=r(q).
                \end{cases}
\end{align}
Moreover, for a node $p$, let $P(p\to\rootnode)$ denote the unique path from $p$ to the root, and define
\begin{align}
    M(p\to\rootnode)=\prod_{e\in P(p\to\rootnode)}L_e,
\end{align}
with the empty product equal to $1$, such that it is the product of the sensitivity factors along the path from $p$ to the root, and therefore bounds how much an error introduced at node $p$ can be amplified before it reaches the final output. 

The next theorem gives a deterministic bound on how local errors propagate through the tree. It also provides checkable range conditions ensuring that every perturbed logarithm argument remains positive. To initialize a common error recursion over all tree nodes, set the error budget of each leaf occurrence $i$ to its given error bound, i.e., $\delta_i=\varepsilon_i$, and define the error budget of every internal node $q$ recursively as
\begin{align}
    \delta_q=L_q^{\mathrm L}\delta_{\ell(q)}+L_q^{\mathrm R}\delta_{r(q)}+\eta_q.\label{eq:recursive_error_budget}
\end{align}

\begin{theorem}\label{thm:path_bound}
Let $D\subset\R^K$ be compact, and let $T$ be real-admissible on $D$. Suppose that every leaf occurrence
$i\in\leaves(T)$ satisfies 
\begin{align}
    |\widehat F_i(s)-F_i(s)|\leq \varepsilon_i, \qquad s\in D,    \label{eq:leaf_error_bound}
\end{align}
and that the noisy evaluation is performed in an admissible order, with every internal node $q\in\internals(T)$ satisfying
\begin{align}
    \widehat F_q(s)  = G\big(\widehat F_{\ell(q)}(s),\widehat F_{r(q)}(s) \big) + \xi_q(s), \qquad
    |\xi_q(s)|\leq\eta_q. \label{eq:internal_error_model}
\end{align}
For every internal node $q$, suppose 
\begin{align}
    F_{\ell(q)}(D)    &\subseteq [a_q+\delta_{\ell(q)},\, b_q-\delta_{\ell(q)}], \label{eq:buffered_left_range}\\
    F_{r(q)}(D)  &\subseteq [c_q+\delta_{r(q)},\,
     d_q-\delta_{r(q)}]. \label{eq:buffered_right_range}
\end{align}
Then, every noisy gate evaluation is well defined, the exact and computed child-value pairs remain in $\mathcal R_q$, and
\begin{align}
    |\widehat F_q(s)-F_q(s)|    \leq\delta_q,
    \qquad    q\in\internals(T),\quad s\in D.\label{eq:node_error_budget}
\end{align}
In particular, we have
\begin{align}
    |\widehat F_T(s)-F_T(s)|&\leq \sum_{i\in\leaves(T)}
    M(i\to\rootnode)\varepsilon_i\nonumber\\
    &\quad+\sum_{q\in\internals(T)}M(q\to\rootnode)\eta_q.    \label{eq:path_bound}
\end{align}
Moreover, every computed logarithm argument satisfies
\begin{align}
    \widehat F_{r(q)}(s)\in[c_q,d_q]\subset\Rpp,
    \qquad   q\in\internals(T),\quad s\in D.\label{eq:computed_positivity}
\end{align}
\end{theorem}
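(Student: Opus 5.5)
Fix $s\in D$. The plan is strong induction over the tree, in any admissible evaluation order (equivalently, by height), with the following joint hypothesis for every node $p$ (leaf or internal): the noisy value $\widehat F_p(s)$ is a well-defined real number and $|\widehat F_p(s)-F_p(s)|\le\delta_p$. For leaves this is exactly \eqref{eq:leaf_error_bound} with $\delta_i=\varepsilon_i$. Take an internal node $q$ whose children $\ell=\ell(q)$ and $r=r(q)$ satisfy the hypothesis. By \eqref{eq:buffered_left_range}, $F_\ell(s)\in[a_q+\delta_\ell,b_q-\delta_\ell]$. Since $|\widehat F_\ell(s)-F_\ell(s)|\le\delta_\ell$, this gives $\widehat F_\ell(s)\in[a_q,b_q]$, and the exact value lies there too. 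In the same way, \eqref{eq:buffered_right_range} gives $F_r(s),\widehat F_r(s)\in[c_q,d_q]$, and $c_q>0$ by \eqref{eq:node_intervals}. So both the exact and the computed operand pairs lie in $\mathcal R_q$. The logarithm in $G(\widehat F_\ell,\widehat F_r)$ is therefore well defined, the computed positivity statement \eqref{eq:computed_positivity} holds, and by \eqref{eq:internal_error_model} $\widehat F_q(s)$ is a well-defined real number.

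Next comes the local error step. Since $F_q=G(F_\ell,F_r)$, the decomposition is
\[
|\widehat F_q-F_q|\le|\exp(\widehat F_\ell)-\exp(F_\ell)|+|\log\widehat F_r-\log F_r|+|\xi_q|.
\]
The mean value theorem on the convex set $[a_q,b_q]$ bounds the first term by $\exp(b_q)|\widehat F_\ell-F_\ell|\le L_q^{\mathrm L}\delta_\ell$. On $[c_q,d_q]$ the derivative of $\log$ is at most $1/c_q$, so the second term is at most $L_q^{\mathrm R}\delta_r$. The third term is at most $\eta_q$. Together these give $\delta_q$ as in \eqref{eq:recursive_error_budget}, which closes the induction and proves \eqref{eq:node_error_budget}. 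The only delicate point is the ordering: membership of both operands in $\mathcal R_q$ must be established before the Lipschitz constants are invoked. That is why the hypothesis carries the error bound, and why the buffered ranges shrink the intervals by exactly $\delta$ of the child.

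Finally, \eqref{eq:path_bound} follows by unrolling the linear recursion \eqref{eq:recursive_error_budget}. I will show by induction on height that
\[
\delta_q=\sum_{i\in\leaves(T_q)}M(i\to q)\varepsilon_i+\sum_{p\in\internals(T_q)}M(p\to q)\eta_p,
\]
where $T_q$ is the subtree rooted at $q$ and $M(p\to q)$ is the product of the $L_e$ along the path from $p$ to $q$. This holds because each path from $p$ to $q$ passes through exactly one child of $q$ and picks up the corresponding factor $L_q^{\mathrm L}$ or $L_q^{\mathrm R}$. The term for $p=q$ uses the empty product $1$. Taking $q=\rootnode$ turns \eqref{eq:node_error_budget} at the root into \eqref{eq:path_bound}. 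I expect no real obstacle here beyond the bookkeeping: leaf occurrences are treated as distinct even when they repeat the same variable, so paths are unique.
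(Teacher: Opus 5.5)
Your proposal is correct and follows essentially the same route as the paper: induction over an admissible order, with the buffered ranges placing both the exact and the computed operands in $\mathcal R_q$ before applying the local bound $\exp(b_q)|u-u'|+|v-v'|/c_q$, and then unrolling the recursion for $\delta_q$ into path products. Your explicit subtree formula for $\delta_q$, proved by induction on height, is a more careful version of the paper's brief ``recursively expanding from the root'' step.
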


\begin{IEEEproof}
Fix $s\in D$ and consider any admissible evaluation order. For every leaf occurrence $i$, \eqref{eq:leaf_error_bound}, and the definition $\delta_i=\varepsilon_i$ give
\begin{align}
    |\widehat F_i(s)-F_i(s)|\leq\delta_i.\label{eq:leaf_induction}
\end{align}

Consider an internal node $q$ and assume that the bound
\eqref{eq:node_error_budget} has already been established for its two children. From \eqref{eq:buffered_left_range}, we have $F_{\ell(q)}(s)    \in [a_q+\delta_{\ell(q)},\, b_q-\delta_{\ell(q)}]$. Using $|\widehat F_{\ell(q)}(s)-F_{\ell(q)}(s)| \leq\delta_{\ell(q)}$, this implies
\begin{align}
    \widehat F_{\ell(q)}(s)\in[a_q,b_q].
    \label{eq:computed_left_containment}
\end{align}
Similarly, \eqref{eq:buffered_right_range} and
$|\widehat F_{r(q)}(s)-F_{r(q)}(s)|\leq\delta_{r(q)}$ give
\begin{align}
    \widehat F_{r(q)}(s)\in[c_q,d_q]\subset\Rpp.
    \label{eq:computed_right_containment}
\end{align}
Thus, both the exact and computed child-value pairs belong to the rectangle $\mathcal R_q    =   [a_q,b_q]\times[c_q,d_q]$, and the logarithm in the noisy evaluation of node $q$ is well defined.

Consider arbitrary $(u,v),(u',v')\in\mathcal R_q$. Since $\mathcal R_q$ is convex, we have
\begin{align}
    |G(u,v)-G(u',v')| \leq \exp(b_q)|u-u'|  +  \frac{1}{c_q}|v-v'|.   \label{eq:local_lipschitz_bound}
\end{align}
Applying \eqref{eq:local_lipschitz_bound} to the exact and computed child-value pairs at node $q$, and using
\eqref{eq:internal_error_model}, gives
\begin{align}
    |\widehat F_q(s)-F_q(s)|   &\leq  L_q^{\mathrm L}
    |\widehat F_{\ell(q)}(s)-F_{\ell(q)}(s)| \nonumber\\
    &\quad+   L_q^{\mathrm R}  |\widehat F_{r(q)}(s)-F_{r(q)}(s)|   +  \eta_q   \nonumber\\
    &\leq    L_q^{\mathrm L}\delta_{\ell(q)}    +L_q^{\mathrm R}\delta_{r(q)}    +  \eta_q=\delta_q.
    \label{eq:error_recursion}
\end{align}
This proves \eqref{eq:node_error_budget} by induction over the admissible evaluation order. Equation \eqref{eq:computed_right_containment} simultaneously proves \eqref{eq:computed_positivity}.

Recursively expanding \eqref{eq:recursive_error_budget} from the root toward the leaves shows that every leaf error $\varepsilon_i$ is multiplied by the product of the sensitivity factors on the unique path from leaf occurrence $i$ to the root. Similarly, every local gate error $\eta_q$ is multiplied by the product of the sensitivity factors on the path from node $q$ to the root. By the definition of $M(\cdot\!\to\!\rootnode)$, giving \eqref{eq:path_bound}.
\end{IEEEproof}

We next specialize the deterministic tree bound to independent AWGN-induced errors at the EML-AirComp nodes, and give a lower bound on the probability that the final root error satisfies the deterministic path-product bound.

\begin{corollary}\label{cor:hp_awgn_tree}
Fix $s\in D$ and assume the setting of Theorem~\ref{thm:path_bound}, except that the internal-node errors are independent AWGN-induced errors such that
\begin{align}
    \xi_q\sim\mathcal N\!\left(0,\frac{\sigma_q^2}{A_q^2}\right),
    \qquad q\in\internals(T).
\end{align}
Choose local tolerances $\eta_q>0$. Assume that the buffered range conditions
\eqref{eq:buffered_left_range} and
\eqref{eq:buffered_right_range} hold for the selected local tolerances $\eta_q$. Then, for this fixed input $s$, the bound
\begin{align}
    |\widehat F_T(s)-F_T(s)|
    &\le \sum_{i\in\leaves(T)}M(i\to\rootnode)\varepsilon_i\nonumber\\
    &\quad+\sum_{q\in\internals(T)}M(q\to\rootnode)\eta_q
    \label{eq:hp_awgn_tree_bound}
\end{align}
holds with probability at least
\begin{align}
    \prod_{q\in\internals(T)}
    \left[
    1-2\Qfun\!\left(\frac{A_q\eta_q}{\sigma_q}\right)
    \right].
    \label{eq:hp_awgn_tree}
\end{align}
On the same event, every logarithm argument remains in its positive interval, and no interval-outage event occurs.
\end{corollary}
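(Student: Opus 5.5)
The plan is to reduce the corollary to Theorem~\ref{thm:path_bound} by working on a good event. Fix $s\in D$. For each internal node $q$, let
\begin{align}
    \mathcal E_q=\{|\xi_q|\le\eta_q\},\qquad \mathcal E=\bigcap_{q\in\internals(T)}\mathcal E_q .
\end{align}
The argument then has two parts. First, on $\mathcal E$ every hypothesis of Theorem~\ref{thm:path_bound} holds for this $s$. Second, $\Prb[\mathcal E]$ is bounded below by the product \eqref{eq:hp_awgn_tree}.

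\textbf{Deterministic part.} The leaf bounds \eqref{eq:leaf_error_bound} are deterministic and are assumed as in the theorem. The budgets $\delta_q$ in \eqref{eq:recursive_error_budget} are defined from the chosen $\eta_q$, so they are deterministic numbers. The buffered range conditions \eqref{eq:buffered_left_range}--\eqref{eq:buffered_right_range} are stated in terms of these $\delta$'s and are assumed. On $\mathcal E$, the internal error model \eqref{eq:internal_error_model} holds with $|\xi_q|\le\eta_q$ at every node. One subtlety deserves care: the theorem's inductive proof only needs the error bound at node $q$ at the moment $q$ is evaluated. Hence on $\mathcal E$ the same induction applies verbatim for this fixed $s$. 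It gives that every computed child pair lies in $\mathcal R_q$, that each noisy gate is well defined, the bound \eqref{eq:node_error_budget}, the path bound \eqref{eq:path_bound} (which is exactly \eqref{eq:hp_awgn_tree_bound}), and the positivity \eqref{eq:computed_positivity}. Containment of the operands in $\mathcal R_q$ also means that the node-wise peak-power scalings from Proposition~\ref{prop:compiler} stay valid. Therefore no interval-outage event occurs on $\mathcal E$.

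\textbf{Probabilistic part.} I would treat the $\xi_q$ as the exogenous noises $Z_q/A_q$. These are independent by assumption, and their laws do not depend on the computed operands, because Fact~\ref{fact:awgn_gate} shows that the error at a gate is $Z_q/A_q$ regardless of the inputs. Consequently the events $\mathcal E_q$ are independent, and
\begin{align}
    \Prb[\mathcal E]=\prod_{q}\Prb[\mathcal E_q]=\prod_q\Big[1-2\Qfun\!\Big(\frac{A_q\eta_q}{\sigma_q}\Big)\Big],
\end{align}
where the last step uses \eqref{eq:awgn_tail}. Since the target bound holds on $\mathcal E$, its probability is at least $\Prb[\mathcal E]$, which yields the stated lower bound. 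The boundary event $|\xi_q|=\eta_q$ has probability zero, so the choice of strict versus non-strict inequality does not matter.

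\textbf{Main obstacle.} The delicate point is the independence and modeling step. Off $\mathcal E$, downstream gates might receive operands outside their intervals, so "the noise at node $q$" must be defined even in that case. I would handle this by defining all $\xi_q$ upfront as independent Gaussian variables attached to the nodes. Downstream behavior off $\mathcal E$, including any outage declaration, is then irrelevant to the event $\mathcal E$ itself. Finally, I would note that positivity and absence of outage hold on exactly the same event $\mathcal E$.
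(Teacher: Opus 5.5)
Your proposal is correct and matches the paper's proof: restrict to the event $\bigcap_{q}\{|\xi_q|\le\eta_q\}$, apply Theorem~\ref{thm:path_bound} on that event, and lower-bound its probability by the product of the per-node AWGN probabilities from Fact~\ref{fact:awgn_gate}, using independence. Your additional care goes beyond what the paper states explicitly but does not change the route. This includes defining every $\xi_q=Z_q/A_q$ in advance so the event is meaningful even when a downstream gate would be in outage, and noting that the induction only needs each node's bound when that node is evaluated.
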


\begin{IEEEproof}
For every internal node $q$, Fact~\ref{fact:awgn_gate} gives
\begin{align}
    \Prb[|\xi_q|\leq\eta_q] = 1-2\Qfun\!\left(\frac{A_q\eta_q}{\sigma_q}
    \right).
\end{align}
Since the node errors are independent, the probability that all events $\{|\xi_q|\leq\eta_q\}$ occur is the product in \eqref{eq:hp_awgn_tree}. On this event, the local error bounds required by Theorem~\ref{thm:path_bound} hold simultaneously. The buffered range conditions then ensure that all gate
evaluations are well defined and that all computed child-value pairs remain in their assigned rectangles. 
\end{IEEEproof}

\subsection{Finite-Alphabet and Digital Gate Errors}
The preceding deterministic bound also applies when the leaf values are quantized and the individual EML gates are evaluated by finite-alphabet methods. Let $\mathcal Q:D\to D_{\mathcal Q}\subset D$ be a quantizer with
finite image $D_{\mathcal Q}$, and write $s^{\mathcal Q}=\mathcal Q(s)$. For each coordinate-leaf occurrence $i$, assume 
\begin{align}
    |s_{k(i)}-s^{\mathcal Q}_{k(i)}|\le  \Delta_{k(i)}.
\end{align}
Suppose also that a finite-alphabet or digital AirComp method is used for every EML node $q$, and let $\mathcal E_q$ be an event on which its local computation error satisfies $|\xi_q|\le\eta_q$. Let also $\widehat F_T(s^{\mathcal Q})$ denote the output obtained by evaluating the perturbed digital EML tree from the quantized leaf values.

The following corollary combines the quantization errors at the leaves with method-dependent errors at the digital EML gates in one end-to-end path-product bound.

\begin{corollary}\label{cor:digital_interface}
Let $D\subset\R^K$ be compact, let $T$ be real-admissible on $D$, and fix $s\in D$. Suppose that the buffered range conditions of Theorem~\ref{thm:path_bound} hold with
$\varepsilon_i=\bar\Delta_i$ and with the selected gate-error tolerances $\eta_q$. Define
\begin{align}
    \bar\Delta_i = \begin{cases}
    \Delta_{k(i)},
    & i \text{ is a coordinate-leaf occurrence},\\
    0,
    & i \text{ is a constant-}1\text{ leaf}.
    \end{cases}
\end{align}
Then, on $\bigcap_{q\in\internals(T)}\mathcal E_q$, we have
\begin{align}
    |\widehat F_T(s^{\mathcal Q})-F_T(s)|    &\le\sum_{i\in\leaves(T)}M(i\to\rootnode)\bar\Delta_i
    \nonumber\\
&\quad+\sum_{q\in\internals(T)}M(q\to\rootnode)\eta_q.
    \label{eq:digital_combined_bound}
\end{align}
If $\Prb[\mathcal E_q^c]\!\le\! p_q$, \eqref{eq:digital_combined_bound} holds with probability at least
\begin{align}
    1\!-\!\sum_{q\in\internals(T)}p_q.
\end{align}
\end{corollary}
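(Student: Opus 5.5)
The plan is to reduce Corollary~\ref{cor:digital_interface} to Theorem~\ref{thm:path_bound} by viewing the quantized digital evaluation as a noisy evaluation of the exact tree at the fixed input $s$, with leaf errors coming from quantization and gate errors from the digital method. We then bound the failure probability with a union bound rather than a product, since the events $\mathcal E_q$ need not be independent.

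\textbf{Leaf errors.} For every leaf occurrence $i$, define the computed leaf value as the quantized leaf value, i.e.\ $\widehat F_i=s^{\mathcal Q}_{k(i)}$ for a coordinate leaf and $\widehat F_i=1$ for a constant leaf.
\begin{itemize}
\item For a coordinate leaf, the quantizer assumption gives $|\widehat F_i-F_i(s)|=|s^{\mathcal Q}_{k(i)}-s_{k(i)}|\le\Delta_{k(i)}=\bar\Delta_i$.
\item For a constant leaf, both values equal $1$, so the error is $0=\bar\Delta_i$.
\end{itemize}
Hence the leaf hypothesis \eqref{eq:leaf_error_bound} holds at $s$ with $\varepsilon_i=\bar\Delta_i$. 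This is the only place the quantizer enters. In particular, $s^{\mathcal Q}\in D$ is not used to evaluate the exact tree. The comparison is always against $F_T(s)$, and $\widehat F_T(s^{\mathcal Q})$ is just the name of the perturbed recursion started from these leaf values.

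\textbf{Gate errors and the deterministic bound.} On $\bigcap_q\mathcal E_q$, each digital node output can be written as $\widehat F_q=G(\widehat F_{\ell(q)},\widehat F_{r(q)})+\xi_q$ with $|\xi_q|\le\eta_q$, which is exactly the internal error model \eqref{eq:internal_error_model}. There is a subtlety here: $G$ must be well defined at the computed operands. The induction in the proof of Theorem~\ref{thm:path_bound} supplies this, because the buffered range conditions put each $\widehat F_{r(q)}$ in $[c_q,d_q]\subset\Rpp$ before node $q$ is evaluated.

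Since the proof of Theorem~\ref{thm:path_bound} fixes $s$ and argues pointwise, it applies verbatim with $\delta$'s built from $\varepsilon_i=\bar\Delta_i$. This yields \eqref{eq:path_bound}, which is \eqref{eq:digital_combined_bound}. It also yields positivity of all computed logarithm arguments as a by-product.

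\textbf{Probability.} We have $\Prb[\bigcap_q\mathcal E_q]=1-\Prb[\bigcup_q\mathcal E_q^c]\ge1-\sum_q p_q$ by the union bound. Since the bound holds on this event, it holds with at least this probability.

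\textbf{Main obstacle.} The step needing care is justifying that Theorem~\ref{thm:path_bound}, stated with "for all $s\in D$" hypotheses, can be used pointwise for a single $s$ on a random event. I would handle this by noting that its proof fixes $s$ at the outset and uses only the hypotheses at that $s$. I would also note that $\mathcal E_q$ must be interpreted as the event that the node-$q$ error bound holds for the operands actually delivered by the children. That reading is consistent with the sequential model, and with it the induction goes through on the intersection event.
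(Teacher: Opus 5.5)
Your proposal is correct and follows the paper's proof. The paper likewise applies Theorem~\ref{thm:path_bound} on $\bigcap_{q\in\internals(T)}\mathcal E_q$ with $\varepsilon_i=\bar\Delta_i$ and then uses the union bound; your remarks on using the theorem pointwise at the fixed $s$, and on reading $\mathcal E_q$ relative to the operands the children actually deliver, only make explicit what the paper leaves implicit.
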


\begin{IEEEproof}
Apply Theorem~\ref{thm:path_bound} on $\bigcap_{q\in\internals(T)}\mathcal E_q$ with $\varepsilon_i=\bar\Delta_i$, and use the union bound.
\end{IEEEproof}

Note that Corollary~\ref{cor:digital_interface} does not provide a finite-alphabet coding method. If applied to the selected finite input alphabets, a method such as ChannelComp or SumComp \cite{RazavikiaDaSilvaFischione2024,
RazavikiaDaSilvaFischione2025} can separately provide the method-dependent local error tolerances $\eta_q$ or failure probabilities $p_q$. The corresponding quantization, coding, modulation, and decoding analyses remain separate from the EML-tree error propagation in \eqref{eq:digital_combined_bound}.

\section{Hierarchical EML-AirComp Example} \label{sec:hiererchicalexample}
Consider four terminals, two relays, and one fusion center. Terminals $1$ and $2$ transmit to relay $1$, terminals $3$ and $4$ to relay $2$, and both relays transmit to the fusion center. The first-hop MACs use noninterfering resources and may operate simultaneously. At each EML gate, the two operands are held by different transmitters and combined over a MAC.

Let $s_1\in[a_1,b_1]$, $s_2\in[c_2,d_2]\subset\Rpp$, $s_3\in[a_3,b_3]$, and $s_4\in[c_4,d_4]\subset\Rpp$. Define
\begin{align}
    U_1\!=\!G(s_1,s_2),\;\; U_2\!=\!G(s_3,s_4),\;\; F_T(s)\!=\!G(U_1,U_2).\label{eq:hier_tree}
\end{align}
Since $G(u,v)$ is increasing in $u$ and decreasing in $v$, the exact first-level output intervals are determined by
\begin{align}
    &\underline u_1=\exp(a_1)-\log d_2,\quad \overline u_1=\exp(b_1)-\log c_2,\nonumber\\
    &\underline u_2=\exp(a_3)-\log d_4,\quad \overline u_2=\exp(b_3)-\log c_4.\label{eq:hier_ranges}
\end{align}
If $\underline u_2>0$, then the tree is real-admissible on the stated domain.

For AWGN gate evaluations, consider $\widehat U_1=U_1+\xi_1$, $\widehat U_2=U_2+\xi_2$, and $\widehat F_T=G(\widehat U_1,\widehat U_2)+\xi_3$, where \(\xi_q\sim\mathcal N(0,\sigma_q^2/A_q^2)\), \(q\in\{1,2,3\}\), are mutually independent. The two first-level gates have $W_1=2$ and the root layer has $W_2=1$, so Remark~\ref{rem:parallelism_latency} gives $D_1^\star(T)=3$ and $D_P^\star(T)=2$ for $P\ge2$.

The following result gives the end-to-end error and a sufficient positivity condition for the root logarithm.

\begin{proposition}
\label{prop:hierarchical_example}
Assume $|\xi_q|\le\eta_q$ for $q\in\{1,2,3\}$ and $\eta_2<\underline u_2$. Then, we have $\widehat U_2>0$ and
\begin{align}
    |\widehat F_T-F_T(s)|\le \exp({\overline u_1+\eta_1})\eta_1+\frac{\eta_2}{\underline u_2-\eta_2}+\eta_3.\label{eq:hier_error_bound}
\end{align}
\end{proposition}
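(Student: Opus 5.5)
The plan is to argue directly from the three-node structure in \eqref{eq:hier_tree} rather than through the buffered-range hypotheses of Theorem~\ref{thm:path_bound}, which are not assumed here. The first step is positivity. Since $U_2=G(s_3,s_4)\ge\underline u_2$ on the stated domain, which follows from the monotonicity of $G$ used to obtain \eqref{eq:hier_ranges}, and $|\xi_2|\le\eta_2$, we get $\widehat U_2=U_2+\xi_2\ge\underline u_2-\eta_2>0$. This makes $\log\widehat U_2$ well defined, so $\widehat F_T=G(\widehat U_1,\widehat U_2)+\xi_3$ is a legitimate real quantity.

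Next I split the root error with the triangle inequality:
\begin{align}
|\widehat F_T-F_T(s)|\le|\exp(\widehat U_1)-\exp(U_1)|+|\log\widehat U_2-\log U_2|+|\xi_3|.
\end{align}
The last term is at most $\eta_3$.

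For the exponential term I use the mean value theorem. The intermediate point lies between $U_1$ and $\widehat U_1$, so it is at most $\max\{U_1,\widehat U_1\}\le U_1+\eta_1\le\overline u_1+\eta_1$. Since $\exp$ is increasing, this gives the bound $\exp(\overline u_1+\eta_1)\eta_1$.

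For the logarithm term I again use the mean value theorem. Both $U_2$ and $\widehat U_2$ are at least $\underline u_2-\eta_2>0$, so $|(\log)'|\le 1/(\underline u_2-\eta_2)$ on the segment between them. Together with $|\widehat U_2-U_2|\le\eta_2$, this yields $\eta_2/(\underline u_2-\eta_2)$. Summing the three terms gives \eqref{eq:hier_error_bound}.

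The only delicate step is choosing the Lipschitz constants on enlarged intervals. The rectangle must contain both the exact and the perturbed operands, which is why the bound uses $\overline u_1+\eta_1$ and $\underline u_2-\eta_2$ rather than the exact ranges. With that in mind the rest is routine, and there is no need to invoke the buffered conditions \eqref{eq:buffered_left_range}--\eqref{eq:buffered_right_range}. As a consistency check, this is the special case of \eqref{eq:local_lipschitz_bound} with $b_q=\overline u_1+\eta_1$ and $c_q=\underline u_2-\eta_2$, and with exact leaves, i.e., $\varepsilon_i=0$.
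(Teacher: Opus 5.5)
Your proof is correct and takes essentially the same approach as the paper. The paper also gets positivity from $\widehat U_2\ge\underline u_2-\eta_2>0$, applies the local Lipschitz bound \eqref{eq:local_lipschitz_bound} on the error-enlarged rectangle $[\underline u_1-\eta_1,\overline u_1+\eta_1]\times[\underline u_2-\eta_2,\overline u_2+\eta_2]$ (which you simply unpack coordinate-wise via the mean value theorem), and then adds $|\xi_3|\le\eta_3$.
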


\begin{IEEEproof}
Since $\widehat U_2\ge\underline u_2-\eta_2>0$, the root logarithm is well defined. The exact and perturbed root inputs lie in
\begin{align}
[\underline u_1-\eta_1,\overline u_1+\eta_1]\times
[\underline u_2-\eta_2,\overline u_2+\eta_2].
\end{align}
Hence, \eqref{eq:local_lipschitz_bound} gives
\begin{align}
    |G(\widehat U_1,\widehat U_2)-G(U_1,U_2)|\le \exp({\overline u_1+\eta_1})\eta_1+\frac{\eta_2}{\underline u_2-\eta_2}.
\end{align}
Combining this bound with $|\xi_3|\le\eta_3$ proves \eqref{eq:hier_error_bound}.
\end{IEEEproof}

Moreover, for $\kappa_q\in(0,1)$, choose
\begin{align}
    \eta_q   = \frac{\sigma_q}{A_q}\Qfun^{-1}\left(\frac{\kappa_q}{2}\right),
    \qquad q\in\{1,2,3\}. \label{eq:hier_local_tolerances}
\end{align}
Then, we have $\Prb[|\xi_q|\leq\eta_q]=1-\kappa_q$. If $\eta_2<\underline u_2$, independence of the three node
errors implies that \eqref{eq:hier_error_bound} holds with probability
$\prod_{q=1}^{3}(1-\kappa_q) \geq 1-\sum_{q=1}^{3}\kappa_q$. On the same bounded-error event, the node-wise peak-power constraints follow from Proposition~\ref{prop:compiler} using the corresponding error-enlarged operand intervals. Such guarantees are therefore conditioned on the bounded-error event. Under coherent flat fading, the same deterministic error bound holds conditionally on both the node-wise channel-inversion feasibility events and the bounded-error events, as used above.


\section{Conclusion}
We derived peak-power-feasible AWGN and coherent flat-fading EML-AirComp gates, resource-latency bounds for prescribed real-admissible EML trees, and a deterministic error recursion resulting in a path-product bound, positivity conditions, and a high-probability AWGN guarantee. A two-hop four-terminal example illustrated the hierarchical analysis, and the digital interface propagates leaf-quantization and local gate errors through the tree. The model assumes that the operands of each gate are available at the corresponding transmitters, and routing, storage, and the existence of real-admissible EML representations for arbitrary functions are left for future work.

\section*{Acknowledgment}
This work was partially supported by German Federal Ministry of Research, Technology and Space (BMFTR) 6GEM+ Transfer Hub under Grants 16KIS2412 and 16KISS005. The author used OpenAI's ChatGPT 5.5 for writing improvements and reviewed and verified all manuscript content.

\bibliographystyle{IEEEtran}
\bibliography{sample}

\end{document}